\documentclass[%
reprint,
bibnotes,
aps,
prl,
]{revtex4-2}
\usepackage{amsmath,amssymb,mathtools}
\usepackage{xcolor}
\definecolor{mplblue}{HTML}{1F77B4}
\definecolor{mplgreen}{HTML}{2CA02C}
\definecolor{mplred}{HTML}{D62728}
\definecolor{mplpurple}{HTML}{9467BD}
\definecolor{mplolive}{HTML}{BCBD22}
\usepackage[colorlinks=true, allcolors=mplblue]{hyperref}
\usepackage{graphicx}
\usepackage{dcolumn}
\usepackage{bm}
\usepackage{dsfont}
\usepackage{hologo}
\usepackage{physics}
\usepackage{url}
\usepackage{mathrsfs}
\usepackage{amsthm}
\usepackage{booktabs}
\usepackage{multirow}
\usepackage{adjustbox}
\usepackage{algorithm}
\usepackage{algpseudocode}

\graphicspath{{resources/img/}}

\newtheorem{theorem}{Theorem}

\renewcommand\norm[1]{\left\Vert#1\right\Vert}
\renewcommand\abs[1]{\left\vert#1\right\vert}
\renewcommand\var[1]{\mathrm{var}#1}
\renewcommand\trace{\operatorname{Tr}}
\renewcommand{\d}{\mathrm{d}}

\begin{document}

\preprint{APS/123-QED}

\title{Efficient characterization of general
	Gottesman-Kitaev-Preskill qubits}

\author{Vojt\v{e}ch Kucha\v{r}}
\author{Petr Marek}%
\affiliation{%
	Department of Optics, Palack\'y University, 17. listopadu 1192/12, 779 00
	Olomouc, Czech Republic
}%

\date{\today}

\begin{abstract}
	Practical utilization of Gottesman-Kitaev-Preskill (GKP) qubits requires not only the preparation of logical basis states, but also the ability to prepare and evaluate arbitrary logical qubit superpositions. This is typically done via quantum state tomography, which is resource-intensive. We introduce a family of positive semidefinite Hermitian operators, one for each point on the logical Bloch sphere, whose unique zero-eigenvalue ground states are the corresponding ideal GKP qubit states. We show that the expectation value of each operator serves as a witness of non-Gaussianity, and corresponds to twice the logical infidelity for states in the ideal logical GKP subspace. Furthermore, the truncated finite-dimensional counterparts of these operators yield physical approximations of arbitrary logical GKP states as their ground states. Crucially, the evaluation of the proposed operators requires only three sets of homodyne measurements, making this framework practical for both experimental characterization and numerical optimization of state preparation circuits.
\end{abstract}

\maketitle


Bosonic systems, such as modes of traveling light, are considered as candidates for the realization of quantum computation~\cite{nielsenQuantumComputationQuantum2012, asavanantOpticalQuantumComputers2022}. Their advantage lies in their processing speed and the ability to interact: with the basic tools of linear optics it is possible to generate entangled states of practically unlimited size~\cite{yoshikawaInvitedArticleGeneration2016,asavanantGenerationTimedomainmultiplexedTwodimensional2019, larsenFaulttolerantContinuousvariableMeasurementbased2021b, aghaeeradScalingNetworkingModular2025}. The remaining open challenges lie in the preparation and manipulation of the logical qubits~\cite{bourassaBlueprintScalablePhotonic2021, hastrupProtocolGeneratingOptical2022, endoNongaussianQuantumState2023, konnoLogicalStatesFaulttolerant2024, larsenIntegratedPhotonicSource2025, endoHighRateFourPhoton2025}. There are several possible approaches to encoding information in such systems; the Gottesman-Kitaev-Preskill (GKP) code is favored by many due to its conceptually straightforward approach to perform error correction and implement elementary Clifford gates, both by application of feasible Gaussian operations~\cite{gottesmanEncodingQubitOscillator2001, schmidtQuantumErrorCorrection2022, hastrupAnalysisLossCorrection2023, conradGottesmankitaevpreskillCodesLattice2022}.

The ideal logical GKP qubit states are defined as nonphysical infinite superpositions of quadrature eigenstates. Their finite energy approximations can be found either as superpositions of squeezed states~\cite{mensenPhasespaceMethodsRepresenting2021}, or as the ground state of a specific operator restricted to a particular Fock space dimension~\cite{marekGroundStateNature2024}, and they can be conditionally prepared in a quantum boson sampling architecture with iterative mixing steps~\cite{vasconcelosAllopticalGenerationStates2010, weigandGeneratingGridStates2018, suConversionGaussianStates2019, winnelDeterministicPreparationOptical2024}. Evaluating the quality of GKP qubit states, be it for the purpose of analyzing experimental results or numerically optimizing the state preparation circuits, can be done with the help of their logical stabilizers or their GKP nonlinear squeezing~\cite{brauerGeneralizedSqueezingWitness2025, hanamuraStellarRankControl2025a}. Such approaches are needed, because quantum state fidelity is ill-defined as a measure of similarity of two states when the target state is nonphysical~\cite{kucharNonlinearSqueezingSuperpositions2025}. Additionally, they are also practical from the point of evaluation of experimental data, as they only require two sets of measurements, one for each quadrature operator.

However, preparation and evaluation of the logical basis states are not sufficient for practical applications. Specific superpositions of logical basis states are also required both for the encoding of logical information and as resources for measurement-induced realization of non-Clifford gates~\cite{bartlettEfficientClassicalSimulation2002a, bravyiUniversalQuantumComputation2005, baragiolaAllgaussianUniversalityFault2019, konnoNoncliffordGateOptical2021, hastrupUnsuitabilityCubicPhase2021}. While such arbitrary qubit states can be defined using only the basis states, their characterization currently requires evaluating their fidelity to some approximate state. Furthermore, any such evaluation requires costly quantum state tomography~\cite{lvovskyContinuousvariableOpticalQuantumstate2009}.

In this letter we introduce an efficient way to characterize and evaluate qubits encoded in the GKP basis. The approach is based on the nonlinear squeezing paradigm, which builds upon the framework of Gaussian squeezing and applies it to various non-Gaussian states, such as cubic phase states, Schr\"odinger cat states, or Fock states~\cite{kalaCubicNonlinearSqueezing2022, brauerGenerationQuantumStates2021, brauerCatabilityMetricEvaluating2025, provaznikWitnessesNonGaussianFeatures2026}. We generalize the nonlinear squeezing for GKP states and, for each possible logical GKP state, present a positive semidefinite Hermitian operator with the corresponding ideal logical GKP state as its unique ground state. We show that these operators distinguish between different logical GKP states, can be used to find physical approximations, and are suitable for direct measurement and evaluation.
\par

A quantum two-level system is completely described by the Pauli operators \(\boldsymbol{\hat{\sigma}}=(\hat{\sigma}_x,\hat{\sigma}_y,\hat{\sigma}_z)\) satisfying \(\left[ \hat{\sigma}_i, \hat{\sigma}_j \right] = 2i \sum_k \varepsilon_{ijk} \hat{\sigma}_k\) and \(\hat{\sigma}_i^2 = \hat{1}\) for \(i,j,k \in \{x,y,z\}\). The set of pure states is then uniquely mapped to the Bloch sphere
\begin{equation}
	\hat{\rho} = \frac{1}{2}\left( \hat{1}+\mathbf{u}\cdot\boldsymbol{\hat{\sigma}} \right),
\end{equation}
where \(\mathbf{u}=(u_x,u_y,u_z)\in\mathbb{R}^3\) is the unit Bloch vector and where \(\cdot\) denotes the dot product. Equivalently, defining the operator
\begin{equation}
	\hat{O}_q = \hat{1} - \left( u_x \hat{\sigma}_x + u_y \hat{\sigma}_y + u_z \hat{\sigma}_z \right)
\end{equation}
and denoting eigenstates of \(\hat{\sigma}_z\) as \(\ket{0}\) and \(\ket{1}\), an arbitrary pure state \(\ket{\psi}=\cos\theta\ket{0}+e^{i\phi}\sin\theta\ket{1}\) is found as the lowest eigenvalue eigenstate of this operator, where
\begin{subequations}\label{eq:scparameters}
	\begin{align}
		 & \cos\theta = \sqrt{\frac{1+u_z}{2}}, \\
		 & e^{i\phi}\sin\theta =
		\frac{u_x + i u_y}{\sqrt{2(1+u_z)}},
	\end{align}
\end{subequations}
offering a more natural parameterization of the Bloch sphere with \(\theta\in[0,\pi/2]\) and \(\phi\in[0,2\pi)\).

A general pure qubit state in the GKP basis can be expressed as
\begin{equation}\label{eq:GKP qubit}
	|\psi\rangle = \cos\theta |0_\mathrm{L}\rangle + e^{i\phi}\sin\theta|1_\mathrm{L}\rangle,
\end{equation}
where the basis states
\begin{subequations}\label{eq:GKP basis}
	\begin{align}
		 & |0_\mathrm{L}\rangle \propto \sum_{s \in \mathbb{Z}} |x = 2 s \sqrt{\pi}\rangle,   \\
		 & |1_\mathrm{L}\rangle \propto \sum_{s \in \mathbb{Z}} |x = (2s+1) \sqrt{\pi}\rangle
	\end{align}
\end{subequations}
are defined as infinite superpositions of specific eigenstates of the quadrature operator $\hat{x}$, that together with its canonically conjugate operator $\hat{p}$ satisfying $[\hat{x},\hat{p}] = i$ describe the harmonic oscillator upon which the GKP qubits are defined. The logical basis states in Eqs.~\eqref{eq:GKP basis} are also simultaneous eigenstates of the stabilizer operators \(\hat{X}^2,\hat{Z}^2\) and span the logical GKP space on which the logical Pauli operators are defined as
\begin{equation}
	\hat{X} = e^{-i\hat{p}\sqrt{\pi}}, ~ \hat{Z}= e^{i \hat{x}\sqrt{\pi}}, ~ \hat{Y} = i\hat{X}\hat{Z}.
\end{equation}
Therefore, the basis states can be found as eigenstates of suitable linear combinations of these operators~\cite{marekGroundStateNature2024}. This approach can be generalized for arbitrary qubit states as follows:

\begin{theorem}\label{thm}
	An arbitrary ideal GKP qubit state, as defined in Eq.~\eqref{eq:GKP qubit}, is the unique zero-eigenvalue ground state of the positive semidefinite operator
	\begin{equation}\label{eq:general GKP qubit}
		\hat{O}_\mathrm{GKP}(\mathbf{u}) = \hat{O}_1 + \hat{1} - \mathbf{u}\cdot\hat{\mathbf{O}},
	\end{equation}
	where $\mathbf{u}\in\mathbb{R}^3$ is a unit Bloch vector with the angles $\theta,\phi$ defined via Eqs.~\eqref{eq:scparameters}, \(\hat{\mathbf{O}} = (\hat{O}_x,\hat{O}_y,\hat{O}_z)\), and the component operators are defined as
	\begin{subequations}
		\begin{align}
			\label{eq:Ooperators}
			\hat{O}_1 & = \hat{1}-\frac{1}{6}\left(\hat{X}^2 + \hat{X}^{2 \dag} + \hat{Y}^2 + \hat{Y}^{2 \dag} + \hat{Z}^2 + \hat{Z}^{2 \dag}\right),                                \\
			\hat{O}_x & = \frac{\hat{X}+\hat{X}^{\dag}}{{2}},~\hat{O}_y =\frac{\hat{Y} + \hat{Y}^{\dag}}{{2}}, ~ \hat{O}_z = \frac{\hat{Z}+\hat{Z}^{\dag}}{{2}}.\label{eq:gkppaulis}
		\end{align}
	\end{subequations}
\end{theorem}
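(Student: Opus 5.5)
The plan is to block-diagonalize the whole operator using the commutation structure of the logical Paulis. From $[\hat x,\hat p]=i$ and the Baker--Campbell--Hausdorff identity one gets $\hat X\hat Z=-\hat Z\hat X$. Hence $\hat X^2$ and $\hat Z^2$ commute with $\hat X$, $\hat Z$ and with each other. Also $\hat Y^2=(i\hat X\hat Z)^2=-\hat X\hat Z\hat X\hat Z=\hat X^2\hat Z^2$.

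The stabilizers $\hat X^2=e^{-2i\hat p\sqrt\pi}$ and $\hat Z^2=e^{2i\hat x\sqrt\pi}$ are commuting unitaries. I would diagonalize them jointly through the Zak transform, i.e.\ the basis of quasi-periodic superpositions $\sum_s e^{i k s\,2\sqrt\pi}\ket{x=\xi+2s\sqrt\pi}$. This writes the Hilbert space as a direct integral over sectors $\mathcal H_{\alpha,\beta}$ on which $\hat Z^2=e^{i\beta}$ and $\hat X^2=e^{i\alpha}$. Fixing $\xi\in[0,\sqrt\pi)$ together with the two offsets $\xi$ and $\xi+\sqrt\pi$ shows that each sector is two-dimensional. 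It is invariant under $\hat X$ and $\hat Z$, since these commute with the stabilizers.

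On a fixed sector, define $\hat X'=e^{-i\alpha/2}\hat X$, $\hat Z'=e^{-i\beta/2}\hat Z$ and $\hat Y'=i\hat X'\hat Z'$. These are anticommuting unitaries squaring to $\hat 1$, so they form a Pauli triple on $\mathcal H_{\alpha,\beta}\cong\mathbb C^2$; any fixed branch of the half angles will do. Since $\hat Y=e^{i(\alpha+\beta)/2}\hat Y'$, one finds
\begin{equation*}
\hat O_x=\cos\tfrac{\alpha}{2}\,\hat X',\quad \hat O_y=\cos\tfrac{\alpha+\beta}{2}\,\hat Y',\quad \hat O_z=\cos\tfrac{\beta}{2}\,\hat Z',
\end{equation*}
\begin{equation*}
\hat O_1=1-\tfrac13\left[\cos\alpha+\cos\beta+\cos(\alpha+\beta)\right]\hat 1 .
\end{equation*}
Positivity then splits into two parts.
\begin{itemize}
\item The operator $\mathbf u\cdot\hat{\mathbf O}$ is a Pauli vector with coefficient vector $(u_xc_x,u_yc_y,u_zc_z)$, where every $|c_i|\le1$. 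Its operator norm is therefore at most $|\mathbf u|=1$, so $\hat 1-\mathbf u\cdot\hat{\mathbf O}\ge0$.
\item The scalar $\hat O_1$ is $\ge0$ on every sector. It vanishes iff $\alpha\equiv\beta\equiv0 \pmod{2\pi}$, which is exactly the $+1$ eigenspace of both stabilizers, i.e.\ the code space spanned by $\ket{0_\mathrm L},\ket{1_\mathrm L}$.
\end{itemize}

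Being a sum of two positive parts, $\hat O_\mathrm{GKP}(\mathbf u)$ is positive semidefinite. A zero-eigenvalue state must annihilate both parts. Annihilating $\hat O_1$ forces it into the code sector. There $\hat X',\hat Y',\hat Z'$ are exactly the logical $\hat X,\hat Y,\hat Z$, acting as $\sigma_x,\sigma_y,\sigma_z$ in the basis $\ket{0_\mathrm L},\ket{1_\mathrm L}$. The second part then reduces to the qubit operator $\hat O_q=\hat 1-\mathbf u\cdot\boldsymbol{\hat\sigma}$, whose kernel is one-dimensional. That kernel is spanned by the state with parameters \eqref{eq:scparameters}, which I would check directly, including that $\hat Y=i\hat X\hat Z$ reproduces the $\sigma_y$ sign convention. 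This gives existence and uniqueness, in the same generalized sense in which the ideal GKP states themselves are non-normalizable.

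The main obstacle is making the sector decomposition rigorous. The spectrum of the stabilizers is continuous, so "eigenstates" and "zero eigenvalue" must be understood in a direct-integral or generalized-eigenvector sense. I would handle this by working entirely in the Zak representation. There the operator becomes a multiplication-type $2\times2$ matrix function of $(\alpha,\beta)$ on the torus, and the claims become pointwise statements about that matrix. The remaining risk is the bookkeeping of the half-angle phases and the sign of $\hat Y'$, which I would verify explicitly on the code sector.
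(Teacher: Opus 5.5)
Your proof is correct, but it takes a genuinely different route from the paper's.

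\textbf{Your route.} You diagonalize the commuting stabilizers $\hat X^2,\hat Z^2$ via the Zak transform. This reduces $\hat O_\mathrm{GKP}(\mathbf u)$ on each fiber to the $2\times2$ matrix $\hat O_1(\alpha,\beta)\,\hat 1+\hat 1-\mathbf v(\alpha,\beta)\cdot\boldsymbol{\hat\sigma}'$ with $v_k=u_kc_k$. You then read off positivity and the kernel pointwise on the torus.

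\textbf{The paper's route.} The paper never decomposes the Hilbert space. It observes that $\{\hat O_i,\hat O_j\}=0$ for $i\neq j$ holds on all of $L^2(\mathbb R)$, not just on the code space, because translation by $\sqrt\pi$ in $x$ or $p$ flips the sign of the other cosines. It also uses $\hat O_k^2=\hat 1-\sin^2\hat G_k$, with $\hat G_x=\sqrt\pi\hat p$, $\hat G_y=\sqrt\pi(\hat x-\hat p)$, $\hat G_z=\sqrt\pi\hat x$. Together these give the global sum-of-squares identity $\hat O_\mathrm{GKP}(\mathbf u)=\tfrac12(\hat 1-\mathbf u\cdot\hat{\mathbf O})^2+\sum_k(\tfrac23+\tfrac{u_k^2}{2})\sin^2\hat G_k$. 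The kernel conditions are read off from it: $\sin\hat G_k\ket\psi=0$ for all $k$, then $(\hat 1-\mathbf u\cdot\hat{\mathbf O})\ket\psi=0$. That final step on the code space is the same as yours.

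\textbf{How they relate.} The two pictures agree fiberwise. On each fiber, $\sin^2\hat G_x\to\sin^2\tfrac{\alpha}{2}$, $\sin^2\hat G_z\to\sin^2\tfrac{\beta}{2}$ and $\sin^2\hat G_y\to\sin^2\tfrac{\alpha+\beta}{2}$. The paper's global anticommutation is just the anticommutation of the $c_k\hat\sigma'_k$.

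\textbf{What each buys.} The paper's argument is shorter and needs neither the direct-integral construction nor the half-angle and sign bookkeeping. Yours gives more:
\begin{itemize}
\item the full fiberwise spectrum $\hat O_1(\alpha,\beta)+1\mp|\mathbf v(\alpha,\beta)|$;
\item the fact that $0$ is the bottom of a continuous spectrum, attained only at the single fiber $\alpha=\beta=0$;
\item hence a precise (generalized-eigenvector) meaning of ``unique zero-eigenvalue ground state'' for the non-normalizable ideal states, which the paper leaves implicit.
\end{itemize}
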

\begin{proof}
	Defining the Hermitian generators
	\[
		\hat G_x = \sqrt{\pi}\,\hat p,\qquad
		\hat G_y = \sqrt{\pi}\,(\hat x-\hat p),\qquad
		\hat G_z = \sqrt{\pi}\,\hat x,
	\]
	any logical Pauli operator \(\hat P_k\in\{\hat X,\hat Y,\hat Z\}\), with corresponding generator \(\hat G_k\) for \(k\in\{x,y,z\}\), satisfies
	\[
		\hat P_k^2+\hat P_k^{2\dagger}
		=
		2\cos(2\smash{\hat G_k})
		=
		2\hat 1-4\sin^2(\smash{\hat G_k}).
	\]
	This implies the inequality \(\hat O_1 = \frac{2}{3}\sum_{k}\sin^2(\smash{\hat G_k})\succeq 0\), which is saturated on the ideal GKP logical subspace. The component operators read
	\[
		\hat O_k=\frac{\hat P_k+\hat P_k^\dagger}{2}=\cos(\smash{\hat{G}_k}) \implies \hat O_k^2 = \hat 1-\sin^2(\smash{\hat{G}_k}).
	\]
	On the full phase space, the displacement algebra implies \(\hat P_i\hat P_j=-\hat P_j\hat P_i\) for \(i\ne j\), and the same anticommutation relations propagate to the symmetrized component operators, \(\{\hat O_i,\hat O_j\}=0\) for \(i\ne j\). Consequently,
	\begin{align*}
		(\mathbf u\cdot\hat{\mathbf O})^2
		 & =
		\sum_{k}u_k^2\hat O_k^2 + \sum_{i\neq j} u_i u_j \hat{O}_i \hat{O}_j \\
		 & =
		\sum_{k}u_k^2\bigl(\hat 1-\sin^2(\smash{\hat G_k})\bigr)+ \underset{=\,0}{\underbrace{\sum_{i< j} u_i u_j \{\hat O_i, \hat O_j\}}}.
	\end{align*}
	Expanding \(\frac12(\hat 1-\mathbf u\cdot\hat{\mathbf O})^2 =\frac12\hat 1 -\mathbf u\cdot\hat{\mathbf O}+\frac12(\mathbf u\cdot\hat{\mathbf O})^2\) and using \(\sum_k u_k^2=1\) to cancel the constant terms gives
	\[
		\hat O_{\mathrm{GKP}}(\mathbf u) = \frac12\bigl(\hat 1-\mathbf u\cdot\hat{\mathbf O}\bigr)^2 + \sum_{k}\left(\frac23+\frac{u_k^2}{2}\right)\sin^2(\smash{\hat G_k}),
	\]
	a sum of squares with strictly positive coefficients, so \(\hat O_{\mathrm{GKP}}(\mathbf u)\succeq 0\). For arbitrary \(\ket{\psi}\), \(\hat O_\mathrm{GKP}(\mathbf u)\ket{\psi}=0\) if and only if \(\ket{\psi}\) is in the kernel of both summands. This is equivalent to \(\sin(\smash{\hat G_k})\ket\psi=0\) for all \(k\) and \((\hat 1-\mathbf u\cdot\hat{\mathbf O})\ket\psi=0\). The first condition restricts \(\ket\psi\) to the ideal GKP code space, where \(\hat O_x,\hat O_y,\hat O_z\) act as the logical Pauli operators, and the second term uniquely selects the ideal GKP qubit with Bloch vector \(\mathbf u\).
\end{proof}

For ease of reference, simplified forms of the target complement \(\hat{O}_\mathrm{GKP}-\hat{O}_1=\hat{1}-\mathbf{u}\!\cdot\!\hat{\mathbf{O}}\) for important logical states are listed in Table \ref{tab:operators}. For magic states, such as
\begin{equation}\label{eq:magic_state}
	\ket{H_\mathrm{L}} = \frac{1}{\sqrt{2}}\left(\ket{0_\mathrm{L}}+e^{i\pi/4}\ket{1_\mathrm{L}}\right),
\end{equation}
we obtain irreducible forms with two or three non-commuting cosines, which is a direct consequence of their asymmetric structure in the phase space.

\begin{table}[htbp]
	\centering
	\renewcommand{\arraystretch}{1.5} 
	\begin{tabular*}{\columnwidth}{@{\extracolsep{\fill}}c c l@{}}
		\toprule
		\textbf{Target}        & \textbf{Vector}             & \textbf{Complement} $ \hat{O}_\mathrm{GKP}-\hat{O}_1$ \\
		\midrule
		$|0_\mathrm{L}\rangle$ & $(0,0,1)$                   & $2\sin^2\left(\frac{\sqrt{\pi}}{2}\hat{x}\right)$  \\

		$|1_\mathrm{L}\rangle$ & $(0,0,-1)$                  & $2\cos^2\left(\frac{\sqrt{\pi}}{2}\hat{x}\right)$  \\

		$|+_\mathrm{L}\rangle$ & $(1,0,0)$                   & $2\sin^2\left(\frac{\sqrt{\pi}}{2}\hat{p}\right)$  \\

		$|-_\mathrm{L}\rangle$ & $(-1,0,0)$                  & $2\cos^2\left(\frac{\sqrt{\pi}}{2}\hat{p}\right)$  \\

		$|H_\mathrm{L}\rangle$ & $\frac{1}{\sqrt{2}}(1,1,0)$ & $\hat{1} - \frac{1}{\sqrt{2}}\Bigl[\cos(\sqrt{\pi}\hat{p}) + \cos\big(\sqrt{\pi}(\hat{x}-\hat{p})\big) \Bigr]$ \\

		\bottomrule
	\end{tabular*}
	\caption{Analytic forms of the target operator complements for fundamental GKP logical states alongside their corresponding Bloch sphere vectors $\mathbf{u}=(u_x, u_y, u_z)$.}
	\label{tab:operators}
\end{table}

In certain high-symmetry cases, the full form of \(\hat{O}_1\) is not necessary to use. Ref.~\cite{marekGroundStateNature2024} demonstrates that the logical state $|0_\mathrm{L}\rangle$ is the zero-eigenvalue eigenstate of $2\sin^2(\hat{x}\sqrt{\pi}/2) + 2\sin^2(\hat{p}\sqrt{\pi})$. This is a special case of Eq.~\eqref{eq:general GKP qubit}: the first term is exactly $\hat{1}-\hat{O}_z$, while the second term isolates the momentum-quadrature penalty component of $\hat{O}_1$. This reduced form is entirely sufficient because the missing penalties would only redundantly reinforce the logical requirements already enforced by $\hat{O}_z$.
\par

We have shown that Eq.~\eqref{eq:general GKP qubit} can be used to describe an arbitrary ideal qubit. To fully evaluate its effectiveness for characterization of physical states, we define its \(N\)-dimensional truncated counterpart as
\begin{equation}\label{eq:trunc}
	\hat{O}_{\mathrm{GKP}}^{[N]} = \sum_{n,m=0}^{N-1} \ketbra{n}{n} \hat{O}_{\mathrm{GKP}} \ketbra{m}{m}.
\end{equation}
This allows us to immediately identify specific finite dimensional approximations of logical GKP qubits in \(N\)-dimensional truncated Fock spaces as the ground states of \(\hat{O}_{\mathrm{GKP}}^{[N]}\) for varying \(N\). Examples for the magic state per Eq.~\eqref{eq:magic_state} are shown in Fig.~\ref{fig:magic_state}, showing gradual convergence to the expected phase space structure~\cite{hastrupAnalysisLossCorrection2023}.

\begin{figure}
	\centering
	\includegraphics[width=\columnwidth]{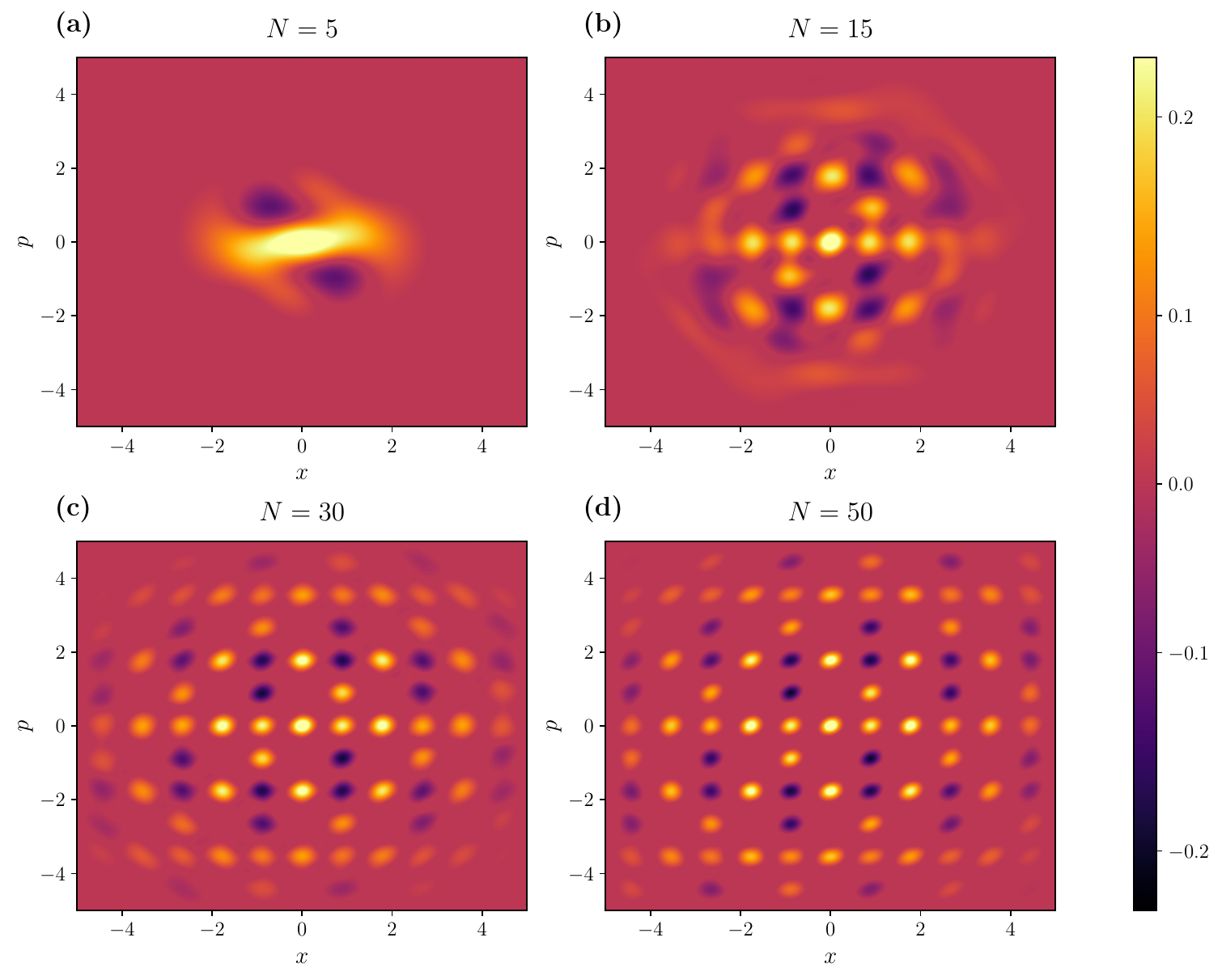}
	\caption{Wigner functions of ground states of \(\hat{O}_{\mathrm{GKP}}^{[N]}\) for \(\ket{H_\mathrm{L}}\), and \textbf{(a)} \(N = 5\), \textbf{(b)} \(N = 15\), \textbf{(c)} \(N = 30\) and \textbf{(d)} \(N = 50\). Generated using Table \ref{tab:operators} and truncated to \(N\)-dimensional Fock spaces per Eq.~\eqref{eq:trunc}.}
	\label{fig:magic_state}
\end{figure}

To verify the proposed method across the entire logical Bloch sphere, we perform a systematic numerical analysis. We evaluate \(\hat{O}_\mathrm{GKP}\) and its ground state for \(\sim 1000\) uniformly sampled target states, which are then sequentially ordered using a nearest-neighbor heuristic (see Supplemental Material~\cite{supplement} for details). By calculating the expectation value of each ordered operator evaluated across all ordered ground states, we can directly compare the result to the pairwise logical infidelity of the respective Bloch vectors, which is given by
\begin{equation}
	1-\mathcal{F}_{ij} = \frac{1}{2}\left(1-\mathbf{u}_{i}\cdot\mathbf{u}_{j}  \right),
\end{equation}
where \(\mathbf{u}_i = (u_{x,i}, u_{y,i}, u_{z,i})\) and \(\mathbf{u}_j = (u_{x,j}, u_{y,j}, u_{z,j})\) are the unit Bloch vectors of the two logical states. This comparison is shown in Fig.~\ref{fig:exp_val}.

\begin{figure}
	\centering
	\includegraphics[width=\columnwidth]{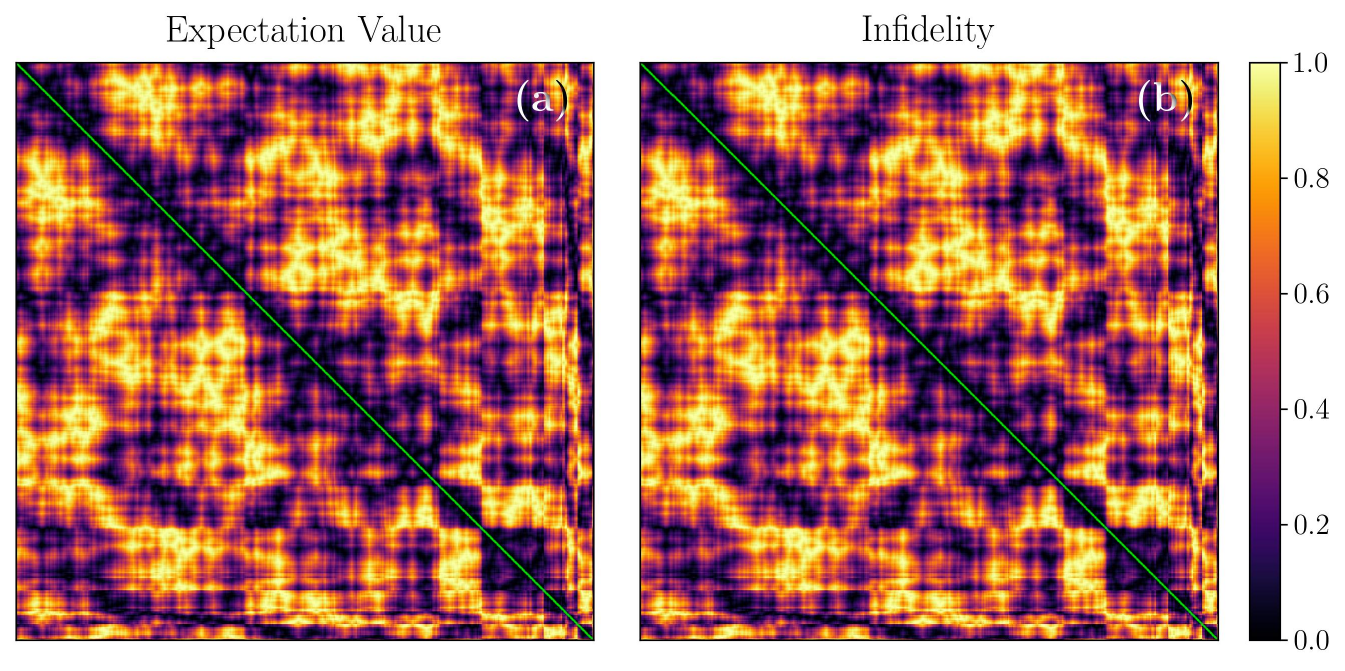}
	\caption{Comparison of expectation values and infidelities for logical GKP qubits. Indexing rows and columns of each heatmap as \(i\) and \(j\) respectively, heatmap \textbf{(a)} represents \(\langle\psi_i|\hat{O}_\mathrm{GKP}^{[N]}(\mathbf{u}_j)|\psi_i\rangle\), where \(\ket{\psi_i}\) is the ground state of \(\hat{O}_\mathrm{GKP}^{[N]}(\mathbf{u}_i)\), \(N=150\), and heatmap \textbf{(b)} represents \(1-\mathcal{F}_{ij}\), where \(\mathbf{u}_i\) and \(\mathbf{u}_j\) are unit vectors of the ordered logical states evenly sampled on the Bloch sphere. Both heatmaps are normalized to \([0,1]\) to allow for direct comparison, minima in each row and column in both heatmaps are marked by green.}
	\label{fig:exp_val}
\end{figure}

While the artifacts of ordering evenly spaced points on a sphere onto a line are clearly visible, the important takeaways are that:
\begin{itemize}
	\item the minima in each row and column of the heatmap, marked by green, fall perfectly onto the diagonal, confirming that the expectation value of the operator is minimized for the intended target state,
	\item the ordering-induced structure is identical in both plots, indicating a linear relationship between \(\langle\hat{O}_\mathrm{GKP}\rangle\) and the logical qubit infidelity.
\end{itemize}

More specifically, by viewing Fig.~\ref{fig:exp_val} as a dataset of expectation value-infidelity pairs, we find a strong linear relationship between the two quantities (see Fig.~\ref{fig:exp_val_corr}), with a numerically extrapolated asymptotic slope \(\langle\hat{O}_\mathrm{GKP}\rangle = \left( 2.00\pm 0.02 \right)(1-\mathcal{F})\) as $N\rightarrow \infty$. The convergence is supported by vanishing correlation error and diverging mutual information of the two quantities, see Supplemental Material~\cite{supplement} for further details.

\begin{figure}
	\centering
	\includegraphics[width=\columnwidth]{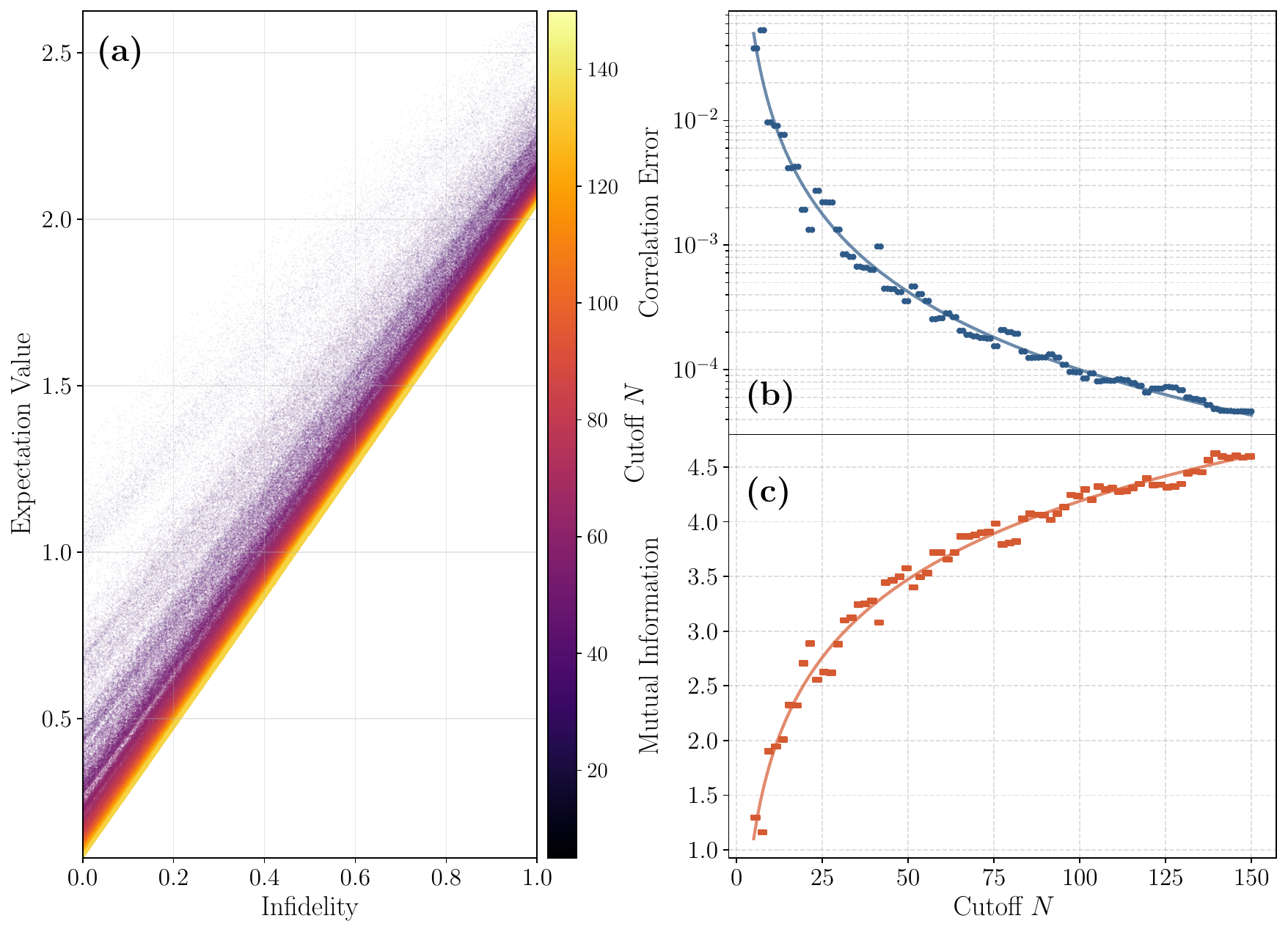}
	\caption{Visualization of the linear relationship between expectation values of \(\hat{O}_\mathrm{GKP}^{[N]}\) and infidelities for logical GKP qubits. \\\textbf{(a)} Scatter plot of expectation value and infidelity pairs for individual sampled points and \(N \in \{5, 6, \dots, 149, 150\}\) indicating a convergence to a linear relationship as \(N\rightarrow\infty\). \\\textbf{(b)}, \textbf{(c)} Scatter plots of the correlation error and mutual information between the expectation value and infidelity as a function of \(N\), respectively. Correlation error is defined as \(1-r\), where \(r\) is the Pearson correlation coefficient, and the mutual information is expressed in nats.}
	\label{fig:exp_val_corr}
\end{figure}

Analytically, this relationship becomes clear once we consider that in the logical GKP subspace, \(\langle \hat{O}_1\rangle\) vanishes and the remaining components act as logical Pauli operators. For a chosen \(\hat{O}_\mathrm{GKP}\) with \(\mathbf{u}_i=(u_{x,i},u_{y,i},u_{z,i})\), this implies (within the logical subspace)
\begin{equation}
	\langle\hat{O}_\mathrm{GKP}\rangle = \langle\hat{1}\rangle - \left( u_{x,i} \langle\hat{X}\rangle + u_{y,i} \langle\hat{Y}\rangle + u_{z,i} \langle\hat{Z}\rangle \right).
\end{equation}
Defining the Bloch vector of the evaluated probe state as \(\mathbf{u}_j = (u_{x,j},u_{y,j},u_{z,j}) = (\langle\hat{X}\rangle,\langle\hat{Y}\rangle,\langle\hat{Z}\rangle)\), we can rewrite this as
\(
\langle\hat{O}_\mathrm{GKP}\rangle = 1 - \mathbf{u}_i\cdot\mathbf{u}_j,
\)
which implies
\begin{equation}
	\langle\hat{O}_\mathrm{GKP}\rangle = 2\left(1-\mathcal{F}_{ij}\right),
\end{equation}
elegantly connecting the expectation value of the operator in the logical GKP subspace to the logical infidelity and confirming the numerical convergence, indicating that ground states of \(\hat{O}_\mathrm{GKP}^{[N]}\) for all possible \(\mathbf{u}\) define approximate logical GKP subspaces that converge to the ideal logical GKP subspace as \(N\rightarrow \infty\).

Consistent with the positive semidefiniteness established in Theorem~\ref{thm}, Fig.~\ref{fig:exp_val_corr} shows that the lowest expectation values of the truncated \(\hat{O}_\mathrm{GKP}^{[N]}\) remain positive and decrease monotonically toward zero as \(N\rightarrow\infty\), so the ideal state associated with \(\mathbf{u}\) is recovered as the unique zero-eigenvalue ground state in the limit.

In earlier GKP literature \cite{duivenvoordenSinglemodeDisplacementSensor2017, terhalScalableBosonicQuantum2020} one often uses effective squeezing parameters \(\Delta_x,\Delta_p\) to quantify the sharpness of the grid peaks in each quadrature. These are defined via the stabilizer expectations
\begin{equation}
	\Delta_x^2=\frac{1}{\pi}\ln\frac{1}{|\langle \hat{Z}^2\rangle|}, \qquad \Delta_p^2=\frac{1}{\pi}\ln\frac{1}{|\langle \hat{X}^2\rangle|}.
\end{equation}
For basis states, or more generally for any state obtained from them by a Gaussian transformation, the nonlinear squeezing \(\langle\hat{O}_\mathrm{GKP}\rangle\) reduces to the scaled sum of two effective squeezing parameters \cite{marekGroundStateNature2024}. For generic superpositions, including magic states, there is no analogous decomposition, making nonlinear squeezing a strict generalization of effective squeezing to non-stabilizer states.

Additionally, unlike individual effective squeezing parameters, nonlinear squeezing acts as a witness of non-Gaussianity. Letting \(\mathcal{G}\) denote the set of all Gaussian states, we have
\begin{equation}
	\min\limits_{\hat{\rho}\,\in\,\mathcal{G}}\trace\left[\hat{\rho}\,\hat{O}_\mathrm{GKP}(\mathbf{u})\right] = \frac{5}{3} - \norm{\mathbf{u}}_\infty,
\end{equation}
where \(\norm{\mathbf{u}}_\infty = \max\{|u_x|,|u_y|,|u_z|\}\leq 1\) is the infinity norm of \(\mathbf{u}\); see Supplemental Material~\cite{supplement} for the derivation. Similar bounds for different stellar ranks can be found numerically~\cite{provaznikWitnessesNonGaussianFeatures2026}.
\par

We have presented an efficient methodology for characterizing arbitrary qubits encoded in the GKP basis, bypassing the need for quantum state tomography. By generalizing the nonlinear squeezing paradigm, we constructed a parameterized family of positive semidefinite Hermitian operators $\hat{O}_\mathrm{GKP}(\mathbf{u})$, one for each point on the logical Bloch sphere, each with the ideal logical GKP superposition as its unique ground state. The construction unifies and generalizes existing results for the logical basis states~\cite{marekGroundStateNature2024} to the full Bloch sphere, including non-Clifford magic states such as $\ket{H_\mathrm{L}}$.

We established that the expectation value of $\hat{O}_\mathrm{GKP}$ provides a direct, measurable proxy for logical infidelity, converging to $2(1-\mathcal{F})$ across the entire logical Bloch sphere in the asymptotic limit. We further demonstrated that the finite-dimensional truncations $\hat{O}_\mathrm{GKP}^{[N]}$ offer a pathway for generating and evaluating physical approximations of arbitrary logical qubit states, providing targets for resource state preparation in finite-dimensional spaces. In addition, $\hat{O}_\mathrm{GKP}$ serves as a non-Gaussianity witness with a lower bound on the set of Gaussian states.

Because the target operators are built from the GKP stabilizers, estimating their expectation values requires only homodyne measurements of three quadratures, \(\hat{x}\), \(\hat{p}\) and \((\hat{x}-\hat{p})/\sqrt{2}\), rather than costly tomography. This makes the proposed framework both a mathematically rigorous metric for state quality and a highly practical tool for analyzing experimental data, benchmarking and optimizing state preparation circuits, and advancing the implementation of fault-tolerant continuous-variable quantum computers.

\begin{acknowledgments}
	We acknowledge support from the Czech Science Foundation (project 25-17472S). PM acknowledges European Union’s HORIZON Research and Innovation Actions under Grant Agreement no. 101080173 (CLUSTEC) and a grant from the Programme Johannes Amos Comenius under the Ministry of Education, Youth and Sports of the Czech Republic reg. no. CZ.02.01.01/00/22\_008/0004649. VK acknowledges IGA-PrF-2026-005 and thanks Giulia Ferrini for constructive feedback. We acknowledge use of the computational cluster at the Department of Optics. Data for generating the figures is openly available \cite{kucharDataEfficientCharacterization2026}.
\end{acknowledgments}

\newpage
\bibliographystyle{apsrev4-2}
\bibliography{resources/bibliography}

\clearpage
\pagestyle{plain}
\onecolumngrid
\appendix
\renewcommand{\appendixname}{Supplement}
\renewcommand{\thefigure}{S\arabic{figure}}
\renewcommand{\thetable}{S\Roman{table}}
\renewcommand{\theHtable}{S\Roman{table}}
\renewcommand{\theHfigure}{fig.app.\arabic{figure}}

\setcounter{figure}{0}
\setcounter{table}{0}
\vfill
\begin{center}
	\vspace*{1em}
	\textbf{\Large Supplemental Material for "Efficient characterization of general Gottesman-Kitaev-Preskill qubits"}\\[1em]

	Vojtech Kucha\v{r} and Petr Marek\\
	\textit{Department of Optics, Palack\'y University, 17. listopadu 1192/12, 779 00 Olomouc, Czech Republic}
\end{center}

\section{Unwrapping the Bloch Sphere}\label{app:sphere}
\markboth{}{}
\markright{}{}
In the first two supplements, we present a breakdown of how we performed the numerical analysis of approximate logical GKP states in truncated Fock spaces.

The basis of the numerical work was evenly covering the logical Bloch sphere with points. This is done by selecting a minimum angular separation \( \delta \) and following the algorithm below:

\begin{algorithm}[H]
	\caption{Uniform sampling of the Bloch sphere}
	\label{alg:bloch_sphere}
	\begin{algorithmic}[1]
		\State \textbf{Input:} \( \delta \) (minimum angular separation constraint)
		\State \textbf{Output:} Set of points \( \mathcal{P} \) on the unit sphere
		\State \textbf{Initialize:} \( \mathcal{P} \gets \) set of 26 core logical states, see Tab.~\ref{tab:logical_states}
		\State \textbf{Calculate:} \( N_{\text{max}} \gets \lceil 16 / \delta^2 \rceil \) (oversampled pool size)
		\State \textbf{Generate:} \( \mathcal{C} \gets \) set of \( N_{\text{max}} \) points forming a Fibonacci spherical spiral lattice \(\theta_j = \arccos\left( 1-\frac{2j+0.5}{N_{\text{max}}} \right), \phi_j = \frac{2\pi j}{\left( 1+\sqrt{5} \right)/2}\)
		\State \textbf{Shuffle:} Randomly permute the sequence of candidate points in \( \mathcal{C} \) to remove any spiral bias
		\State \textbf{For each} candidate point \( c \in \mathcal{C} \) \textbf{do}:
		\State \quad \textbf{If} \(\sphericalangle(c,p) \geq \delta\) for all points \( p \in \mathcal{P} \) \textbf{then}:
		\State \quad \quad \textbf{Add:} \( \mathcal{P} \gets \mathcal{P} \cup \{c\} \)
		\State \textbf{Return:} \( \mathcal{P} \)
	\end{algorithmic}
\end{algorithm}

The generated points are then sorted from the south pole using a greedy traveling salesperson (TSP) heuristic, i.e.

\begin{algorithm}[H]
	\caption{Greedy Nearest-Neighbor TSP Ordering}
	\label{alg:tsp_ordering}
	\begin{algorithmic}[1]
		\State \textbf{Input:} Set of sampled points \( \mathcal{P} \) on the sphere, size \( M \)
		\State \textbf{Output:} Ordered sequence of points \( \mathcal{O} \) forming a path
		\State \textbf{Initialize:} Empty sequence \( \mathcal{O} \)
		\State \textbf{Initialize:} Set of visited points \( \mathcal{V} \gets \emptyset \)
		\State \textbf{Find Start:} Identify \( p_{\text{start}} \in \mathcal{P} \) where \( p_z = \min_{p \in \mathcal{P}}(p_z) \)
		\State \textbf{Set:} Current point \( c \gets p_{\text{start}} \)
		\State \textbf{Add:} \( \mathcal{O} \gets (\, c \,) \), \( \mathcal{V} \gets \{c\} \)
		\State \textbf{While} \( |\mathcal{V}| < M \) \textbf{do}:
		\State \quad \textbf{Find Nearest:} Select \( p_{\text{next}} \in \mathcal{P} \setminus \mathcal{V} \) that minimizes the Euclidean distance \( \| c - p_{\text{next}} \|_2 \)
		\State \quad \textbf{Add:} Append \( p_{\text{next}} \) to \( \mathcal{O} \)
		\State \quad \textbf{Mark:} \( \mathcal{V} \gets \mathcal{V} \cup \{p_{\text{next}}\} \)
		\State \quad \textbf{Update:} \( c \gets p_{\text{next}} \)
		\State \textbf{Return:} \( \mathcal{O} \)
	\end{algorithmic}
\end{algorithm}	A graphical visualization of both the sampling and ordering algorithms is shown in Fig.~\ref{fig:sampling_visual}, with the 26 fixed logical states systematically listed in Tab.~\ref{tab:logical_states}, with example Wigner functions plotted in Fig.~\ref{fig:wigner_functions}. The greedy ordering algorithm naturally 'runs itself into a corner', resulting in increasing jumps between the points at the end of the sequence, which can be seen both in the bottom right corner of the heatmaps in Fig.~\ref{fig:exp_val} and at the end of the sampling series in Fig.~\ref{fig:sampling_visual}.\vfill

\begin{figure}[p]
	\centering

	\parbox{\textwidth}{
		\centering
		\renewcommand{\arraystretch}{0.55}
		\scriptsize
		\begin{tabular}{@{} c c rrr @{\hskip 25pt} c c rrr @{\hskip 25pt} c c rrr @{}}
			\toprule
			\multicolumn{5}{c}{\footnotesize \textbf{Stabilizer States}} & \multicolumn{5}{c}{\footnotesize \textbf{\(H\)-type Magic States}} & \multicolumn{5}{c}{\footnotesize \textbf{\(T\)-type Magic States}}                                                                                                                                                                                                                                                                                                              \\
			\cmidrule(r){1-5} \cmidrule(lr){6-10} \cmidrule(l){11-15}
			\footnotesize Label                                          & \footnotesize                                                      & \footnotesize $u_x$                                                & \footnotesize $u_y$ & \footnotesize $u_z$ & \footnotesize Label & \footnotesize                        & \footnotesize $u_x$ & \footnotesize $u_y$ & \footnotesize $u_z$ & \footnotesize Label & \footnotesize                           & \footnotesize $u_x$ & \footnotesize $u_y$ & \footnotesize $u_z$ \\
			\midrule
			$|0_\mathrm{L}\rangle$                                       & \textcolor{mplblue}{$\blacklozenge$}                               & 0                                                                  & 0                   & 1                   & $H_{+x+y}$          & \textcolor{mplgreen}{$\blacksquare$} & $1/\sqrt{2}$        & $1/\sqrt{2}$        & 0                   & $T_{+++}$           & \textcolor{mplpurple}{$\blacktriangle$} & $1/\sqrt{3}$        & $1/\sqrt{3}$        & $1/\sqrt{3}$        \\
			$|1_\mathrm{L}\rangle$                                       & \textcolor{mplblue}{$\blacklozenge$}                               & 0                                                                  & 0                   & $-1$                & $H_{+x-y}$          & \textcolor{mplgreen}{$\blacksquare$} & $1/\sqrt{2}$        & $-1/\sqrt{2}$       & 0                   & $T_{++-}$           & \textcolor{mplpurple}{$\blacktriangle$} & $1/\sqrt{3}$        & $1/\sqrt{3}$        & $-1/\sqrt{3}$       \\
			$|+_\mathrm{L}\rangle$                                       & \textcolor{mplblue}{$\blacklozenge$}                               & 1                                                                  & 0                   & 0                   & $H_{+x+z}$          & \textcolor{mplgreen}{$\blacksquare$} & $1/\sqrt{2}$        & 0                   & $1/\sqrt{2}$        & $T_{+-+}$           & \textcolor{mplpurple}{$\blacktriangle$} & $1/\sqrt{3}$        & $-1/\sqrt{3}$       & $1/\sqrt{3}$        \\
			$|-_\mathrm{L}\rangle$                                       & \textcolor{mplblue}{$\blacklozenge$}                               & $-1$                                                               & 0                   & 0                   & $H_{+x-z}$          & \textcolor{mplgreen}{$\blacksquare$} & $1/\sqrt{2}$        & 0                   & $-1/\sqrt{2}$       & $T_{+--}$           & \textcolor{mplpurple}{$\blacktriangle$} & $1/\sqrt{3}$        & $-1/\sqrt{3}$       & $-1/\sqrt{3}$       \\
			$|i_\mathrm{L}\rangle$                                       & \textcolor{mplblue}{$\blacklozenge$}                               & 0                                                                  & 1                   & 0                   & $H_{-x+y}$          & \textcolor{mplgreen}{$\blacksquare$} & $-1/\sqrt{2}$       & $1/\sqrt{2}$        & 0                   & $T_{-++}$           & \textcolor{mplpurple}{$\blacktriangle$} & $-1/\sqrt{3}$       & $1/\sqrt{3}$        & $1/\sqrt{3}$        \\
			$|-i_\mathrm{L}\rangle$                                      & \textcolor{mplblue}{$\blacklozenge$}                               & 0                                                                  & $-1$                & 0                   & $H_{-x-y}$          & \textcolor{mplgreen}{$\blacksquare$} & $-1/\sqrt{2}$       & $-1/\sqrt{2}$       & 0                   & $T_{-+-}$           & \textcolor{mplpurple}{$\blacktriangle$} & $-1/\sqrt{3}$       & $1/\sqrt{3}$        & $-1/\sqrt{3}$       \\
			                                                             &                                                                    &                                                                    &                     &                     & $H_{-x+z}$          & \textcolor{mplgreen}{$\blacksquare$} & $-1/\sqrt{2}$       & 0                   & $1/\sqrt{2}$        & $T_{--+}$           & \textcolor{mplpurple}{$\blacktriangle$} & $-1/\sqrt{3}$       & $-1/\sqrt{3}$       & $1/\sqrt{3}$        \\
			                                                             &                                                                    &                                                                    &                     &                     & $H_{-x-z}$          & \textcolor{mplgreen}{$\blacksquare$} & $-1/\sqrt{2}$       & 0                   & $-1/\sqrt{2}$       & $T_{---}$           & \textcolor{mplpurple}{$\blacktriangle$} & $-1/\sqrt{3}$       & $-1/\sqrt{3}$       & $-1/\sqrt{3}$       \\
			                                                             &                                                                    &                                                                    &                     &                     & $H_{+y+z}$          & \textcolor{mplgreen}{$\blacksquare$} & 0                   & $1/\sqrt{2}$        & $1/\sqrt{2}$        &                     &                                         &                     &                     &                     \\
			                                                             &                                                                    &                                                                    &                     &                     & $H_{+y-z}$          & \textcolor{mplgreen}{$\blacksquare$} & 0                   & $1/\sqrt{2}$        & $-1/\sqrt{2}$       &                     &                                         &                     &                     &                     \\
			                                                             &                                                                    &                                                                    &                     &                     & $H_{-y+z}$          & \textcolor{mplgreen}{$\blacksquare$} & 0                   & $-1/\sqrt{2}$       & $1/\sqrt{2}$        &                     &                                         &                     &                     &                     \\
			                                                             &                                                                    &                                                                    &                     &                     & $H_{-y-z}$          & \textcolor{mplgreen}{$\blacksquare$} & 0                   & $-1/\sqrt{2}$       & $-1/\sqrt{2}$       &                     &                                         &                     &                     &                     \\
			\bottomrule
		\end{tabular}
		\vskip 3pt
		\refstepcounter{table}\label{tab:logical_states}
		\footnotesize
		\textsc{Table \thetable.} Coordinates and visual markers for the 26 core logical states on the Bloch sphere.
		\vskip 5pt
	}

	\vfill

	\parbox{\textwidth}{
		\centering
		\includegraphics[width=\textwidth,height=0.47\textheight,keepaspectratio]{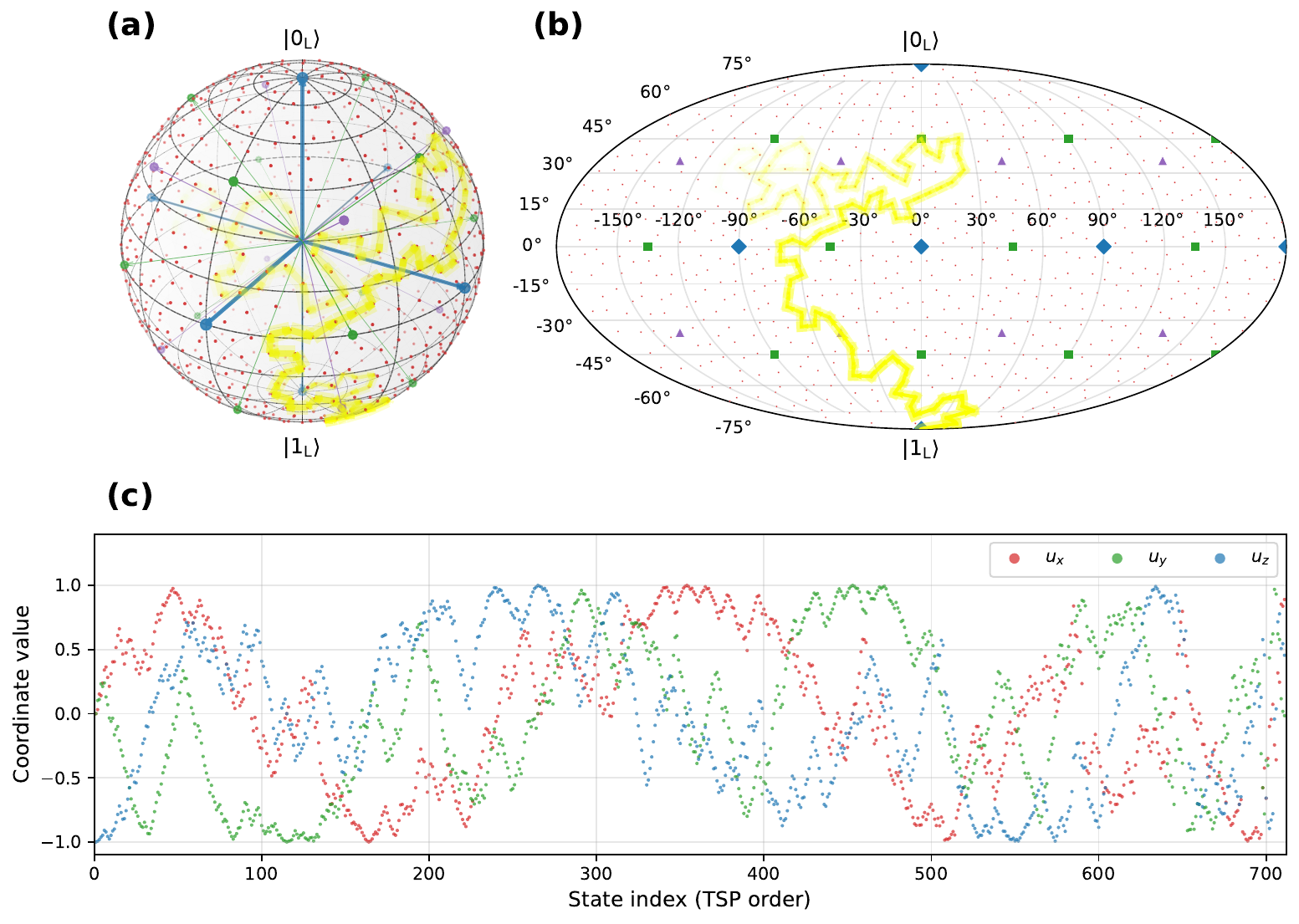}
		\caption{Graphical visualization of the sampling and ordering algorithms. \textbf{(a)} Orthographic 3D projection and \textbf{(b)} Mollweide projection of the Bloch sphere, evenly sampled with \( \delta = 0.1 \). The set of 26 core logical states is highlighted with {\color{mplblue}blue}, {\color{mplgreen}green} and {\color{mplpurple}purple} markers as per Tab.~\ref{tab:logical_states}, with the remaining Fibonacci sampled points in {\color{mplred}red}. First 100 steps of the greedy TSP ordering algorithm are highlighted using {\color{mplolive}yellow} arrows, the remaining steps can be seen coordinate-wise in \textbf{(c)}.}
		\label{fig:sampling_visual}
	}

	\vfill

	\parbox{\textwidth}{
		\centering
		\includegraphics[width=\textwidth,height=0.25\textheight,keepaspectratio]{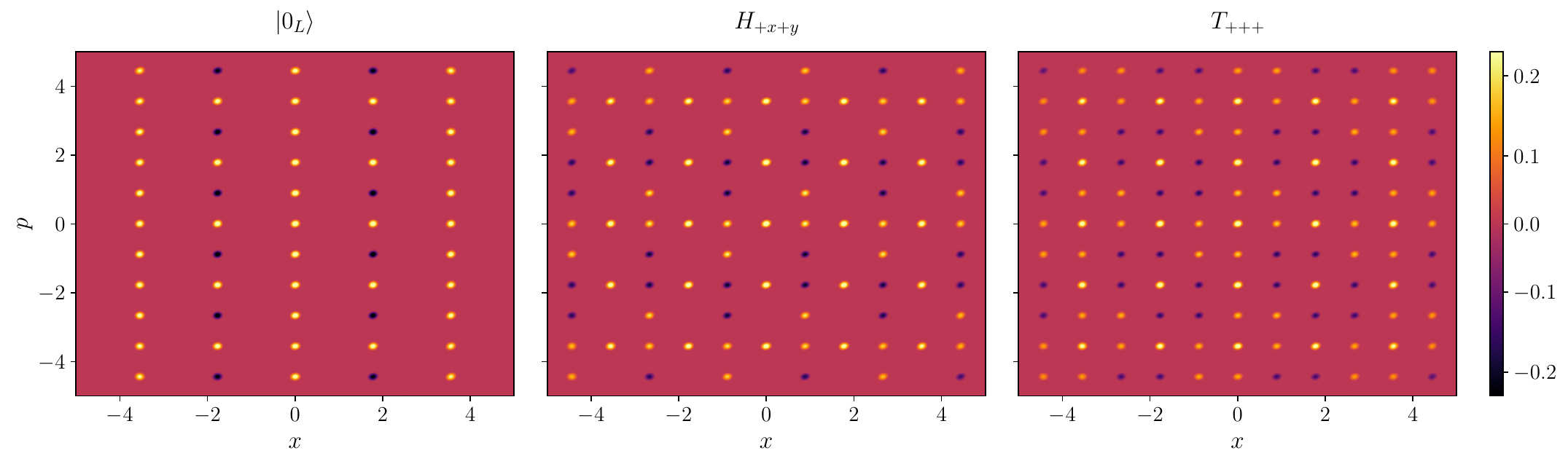}
		\caption{Example Wigner functions of core logical states, plotted as ground states of \(\hat{O}_\mathrm{GKP}^{[250]}(\mathbf{u})\).}
		\label{fig:wigner_functions}
	}
\end{figure}

\section{Converging to the Logical States}\label{app:convergence}
As outlined in the main text, by calculating the \(\hat{O}_\mathrm{GKP}^{[N]}\) operator and its ground state for each of the sampled points and subsequently evaluating the expectation value in all \(M^2\) state-operator combinations and comparing it with the logical infidelity for the respective coordinate pairs, we obtain data in Fig.~\ref{fig:exp_val_corr}.

To quantify the convergence indicated by Fig.~\ref{fig:exp_val_corr}a, we perform a linear regression analysis for each of the cutoff values \(N\). Denoting the expectation value as \(y\) and the infidelity as \(x\) for simplicity, we determine the correlation error as
\begin{equation}
	1 - r = 1 - \frac{\sum_{j=1}^{M^2} (x_j - \bar{x})(y_j - \bar{y})}{\sqrt{\sum_{j=1}^{M^2} (x_j - \bar{x})^2} \sqrt{\sum_{j=1}^{M^2} (y_j - \bar{y})^2}},
\end{equation}
where \(\bar{x}\) and \(\bar{y}\) represent the sample means of the evaluated points. We also calculate the mutual information in nats
\begin{equation}
	I(x; y) = \iint p(x,y) \ln\left( \frac{p(x,y)}{p(x)p(y)} \right) dx dy,
\end{equation}
where \(p(x,y)\) governs the theoretical joint continuous probability density function from which the expectation value and infidelity pairs were sampled. To avoid estimating this distribution, we utilize the Kraskov-Stögbauer-Grassberger algorithm, which calculates the mutual information based on the distances to the $k$-th nearest neighbors in the joint and marginal distributions~\cite{kraskovEstimatingMutualInformation2004}.

The linear regression analysis calculated for each individual cutoff dimension \(N\) provides a set of geometry-dependent slope parameters \(m(N)\) and vertical intercepts \(c(N)\), such that \(y \approx m(N)x + c(N)\). To extrapolate the ground-state convergence into the (\(N \to \infty\)) limit, we perform a secondary, asymptotic non-linear regression on these extracted coefficients.

As the cutoff \(N\) increases, the intercept physically decays toward zero, while the slope \(m(N)\) exhibits a smooth, monotonically saturating plateau. We model this slope convergence using a standard saturating power-law ansatz,
\begin{equation}
	m(N) = m_\infty - A N^{-d},
\end{equation}
where \(m_\infty\) corresponds to the true physical scaling limit, \(A\) is a positive amplitude, and \(d\) is the rate of saturation. To ensure that our extrapolated limit is robust against numerical artifacts and transient behaviors, we further performed a sensitivity meta-analysis by iteratively restricting the regression window to various subsets \(N \in [N_{\text{min}}, N_{\text{max}}]\) with \(N_\mathrm{min}>20\). By statistically averaging the extrapolated \(m_\infty\) values we obtained a limit of \(m_\infty = 2.00 \pm 0.02\), where the uncertainty reflects the one-standard-deviation variance of the values obtained for different windows.

\section{Deriving the Gaussian Bound}\label{app:witness}
In this section, we provide the detailed derivation of the Gaussian bound for the operator $\hat{O}_\mathrm{GKP}$ discussed in the main text.

Using the definitions of the operators from Eq.~\eqref{eq:gkppaulis}, we obtain the simple forms
\begin{subequations}
	\begin{align}
		 & \hat{O}_x = \cos(\hat{p}\sqrt{\pi}),                        \\
		 & \hat{O}_y = \cos\left( (\hat{x}-\hat{p})\sqrt{\pi} \right), \\
		 & \hat{O}_z = \cos(\hat{x}\sqrt{\pi}).
	\end{align}
\end{subequations}
Similarly, the penalty operator $\hat{O}_1$ from Eq.~\eqref{eq:Ooperators} can be expressed as
\begin{equation}
	\hat{O}_1 = \hat{1}-\frac{ \cos(2\sqrt{\pi}\hat{p}) + \cos\left( 2\sqrt{\pi}(\hat{x}-\hat{p}) \right) +\cos(2\sqrt{\pi}\hat{x})}{3}.
\end{equation}
Let us evaluate the expectation value of \(\hat{O}_\mathrm{GKP}\), defined in Eq.~\eqref{eq:general GKP qubit}, for a general state with the Wigner function \(W(x,p)\). Notice that
\begin{equation}
	\expval{\cos(\hat{x}\sqrt{\pi})} = \iint \d x\,\d p\;W(x,p)\cos(x\sqrt{\pi}) = \operatorname{Re}\left[\int \d^2 r\, W(r)e^{ix\sqrt{\pi}}\right],
\end{equation}
where we obtain the real part of the symmetrically ordered characteristic function \(\chi(\xi)\) at \(\xi=(\sqrt{\pi},0)\). Continuing in this manner with the remaining terms, we can write
\begin{equation}
	\expval{\hat{O}_\mathrm{GKP}} = 2 - R,
\end{equation}
where
\begin{equation}
	R(\chi) = \operatorname{Re}\left[\frac{1}{3}\left[ \chi(0,2\sqrt{\pi}) + \chi(2\sqrt{\pi},-2\sqrt{\pi}) + \chi(2\sqrt{\pi},0)\right] + u_x\chi(0,\sqrt{\pi}) + u_y \chi(\sqrt{\pi},-\sqrt{\pi}) + u_z\chi(\sqrt{\pi},0)  \right].
\end{equation}
Finding the Gaussian minimum is therefore identical to maximizing \(R\) for a Gaussian \(\chi\). The symmetrically ordered characteristic function of a general Gaussian with displacement \(\xi_0 = (x_0, p_0)\) and covariance matrix \(\Sigma\) with \(\Sigma + i\Omega/2 \succeq 0\) is
\begin{equation}
	\chi(\xi) = \exp\left( i\xi^T\xi_0 - \frac{1}{2}\xi^T\Sigma\xi \right) = \exp\left[i\left( x x_0 + p p_0 \right) - \frac{ x^2 \sigma_{xx} + 2 xp \sigma_{xp} + p^2 \sigma_{pp}}{2}\right].
\end{equation}
This implies
\begin{equation}
	\begin{aligned}
		R(\chi) = & \; \frac{1}{3}\left[ e^{-2\sigma_{xx}\pi}\cos(2\sqrt{\pi}x_0) + e^{-2\pi(\sigma_{xx}-2\sigma_{xp}+\sigma_{pp})}\cos\left( 2\sqrt{\pi}(x_0-p_0) \right) + e^{-2\sigma_{pp}\pi}\cos(2\sqrt{\pi}p_0) \right]            \\
		          & + u_z e^{-\frac{1}{2}\pi\sigma_{xx}}\cos(\sqrt{\pi}x_0) + u_y e^{-\frac{1}{2}\pi(\sigma_{xx}-2\sigma_{xp}+\sigma_{pp})}\cos\left( \sqrt{\pi}(x_0-p_0) \right)+u_x e^{-\frac{1}{2}\pi\sigma_{pp}}\cos(\sqrt{\pi}p_0).
	\end{aligned}
\end{equation}
For pure Gaussian states, we can additionally employ the parameterization
\begin{subequations}
	\begin{align}
		 & \sigma_{xx} = \frac{1}{2}\left( e^{-2r}\cos^2\theta + e^{2r}\sin^2\theta \right),                                                              \\
		 & \sigma_{xp} = \frac{1}{4}\left( e^{-2r}-e^{2r} \right)\sin 2\theta,                                                                            \\
		 & \sigma_{pp} = \frac{1}{2}\left( e^{-2r}\sin^2\theta + e^{2r}\cos^2\theta \right),                                                              \\
		 & \sigma_{xx} - 2\sigma_{xp} + \sigma_{pp} = \var (\hat{x}-\hat{p}) = \frac{1}{2}\left[ e^{-2r}(1-\sin 2\theta) + e^{2r}(1+\sin 2\theta)\right].
	\end{align}
\end{subequations}
Our goal is therefore to maximize \(R(\chi) = R(x_0,p_0,r,\theta)\) over \(u_x, u_y, u_z\) subject to \(u_x^2+u_y^2+u_z^2 = 1\). It is reasonable to expect that at least some optimal solutions will fully utilize Gaussian squeezing, so let us first examine the case of \(r = +\infty\). We obtain
\begin{subequations}
	\begin{align}
		 & R\xlongrightarrow[r\rightarrow +\infty,\;\theta = 0]{}\frac{1}{3}\cos(2\sqrt{\pi}x_0)+u_z\cos(\sqrt{\pi}x_0)                                                    \\
		 & R\xlongrightarrow[r\rightarrow +\infty,\;\theta = -\frac{\pi}{4}]{}\frac{1}{3}\cos\left(2\sqrt{\pi}(x_0 - p_0)\right)+u_y\cos\left(\sqrt{\pi}(x_0 - p_0)\right) \\
		 & R\xlongrightarrow[r\rightarrow +\infty,\;\theta = \frac{\pi}{2}]{}\frac{1}{3}\cos(2\sqrt{\pi}p_0)+u_x\cos(\sqrt{\pi}p_0).
	\end{align}
\end{subequations}
We can always choose appropriate \(x_0, p_0\) to maximize the cosine associated with the largest coefficient, therefore maximizing \(R\). As such, this infinitely squeezed solution gives us a first idea of the bound as
\begin{equation}
	R_{\mathrm{max},\infty} = \frac{1}{3} +\max\left( \abs{u_x},\abs{u_y},\abs{u_z} \right)\implies \expval{\hat{O}_\mathrm{GKP}}_{\mathrm{min},\infty} = \frac{5}{3}-\max\left( \abs{u_x},\abs{u_y},\abs{u_z} \right).
\end{equation}
Calculating \(\partial_r R\) allows for a direct proof that this is the Gaussian minimum in cases where only one coefficient is non-zero. For general sets of \(u_x, u_y, u_z\), such analytical proof is difficult to perform, however, numerical optimization over \(x_0,p_0,r,\theta\) for various coefficient sets strongly suggests that
\begin{equation}
	\min\limits_{\ket{\psi}\,\in\,\mathcal{G}_\mathrm{pure}}\expval{\hat{O}_\mathrm{GKP}(\mathbf{u})}{\psi} = \frac{5}{3}-\norm{\mathbf{u}}_\infty,
\end{equation}
where \(\mathbf{u} = (u_x, u_y, u_z)\) and where \(\mathcal{G}_\mathrm{pure}\) denotes the set of all pure Gaussian states. Since any mixed Gaussian state is a convex mixture of pure Gaussian states and the expectation value is linear, the bound extends naturally to the full set \(\mathcal{G}\) of all Gaussian states.

\AtEndDocument{\clearpage}
\end{document}